\documentclass[twoside,11pt]{article}

\usepackage{natbib}
\usepackage{amsmath}
\usepackage{amsthm}
\usepackage{amsfonts}
\usepackage{amssymb}
\usepackage{graphicx}
\usepackage[a4paper, margin = .9in]{geometry}
\usepackage{mathptmx}
\usepackage{hyperref}
\usepackage{dsfont}
\usepackage{algorithm}
\usepackage{algorithmicx}
\usepackage{algpseudocode}
\usepackage{subfigure}

\newtheorem{theorem}{Theorem}
\newtheorem{definition}{Definition}
\newtheorem{remark}{Remark}
\newtheorem{lemma}{Lemma}
%\newtheorem{claim}{Claim}
%\newtheorem{claim}{Claim}
% Definitions of handy macros can go here

%\newcommand{\dataset}{{\cal D}}
%\newcommand{\fracpartial}[2]{\frac{\partial #1}{\partial  #2}}

% Heading arguments are {volume}{year}{pages}{submitted}{published}{author-full-names}

%\jmlrheading{1}{2000}{1-48}{4/00}{10/00}{Marina Meil\u{a} and Michael I. Jordan}

% Short headings should be running head and authors last names

%\ShortHeadings{A note on quickly sampling a sparse matrix with low rank expectation}{Rohe, Tao, Han,  Binkiewicz}
%\firstpageno{1}

\begin{document}

\title{A note on quickly sampling a sparse matrix with low rank expectation}

\author{Karl Rohe, Jun Tao, Xintian Han, Norbert Binkiewicz    }
       
%\author{Karl Rohe$^1$, Jun Tao$^2$, Xintian Han$^2$, Norbert Binkiewicz$^1$    }
%       
%\date{%
%    $^1$ UW-Madison, 
%    $^2$ Peking University\\[2ex]%
%    \today
%}

\maketitle

\begin{abstract}%   <- trailing '%' for backward compatibility of .sty file
Given matrices $X,Y \in \mathds{R}^{n \times K}$ and $S \in \mathds{R}^{K \times K}$ with positive elements, this paper proposes an algorithm \texttt{fastRG} to sample a sparse matrix $A$ with low rank expectation $E(A) = XSY^T$ and independent Poisson elements.  This allows for quickly sampling from a broad class of stochastic blockmodel graphs (degree-corrected, mixed membership, overlapping) all of which are specific parameterizations of the generalized random product graph model defined in Section \ref{gRPG}. The basic idea of \texttt{fastRG} is to first sample the number of edges $m$ and then sample each edge. The key insight is that because of the the low rank expectation, it is easy to sample individual edges.  The naive ``element-wise'' algorithm requires $O(n^2)$ operations to generate the $n\times n$ adjacency matrix $A$.  In sparse graphs, where $m = O(n)$, ignoring log terms, \texttt{fastRG} runs in time $O(n)$.  An implementation in \texttt{R} is available on github.  A computational experiment in Section \ref{sec:implementation} simulates graphs up to $n=10,000,000$ nodes with $m = 100,000,000$ edges.  For example, on a graph with $n=500,000$ and $m = 5,000,000$, \texttt{fastRG} runs in less than one second on a 3.5 GHz Intel i5.  
%Moreover, \texttt{fastRG} could be easily parallelized.  
%Such network models serve as test beds of various types of clustering and statistical inference algorithms.  In performing simulation experiments, many of these statistical algorithms run dramatically faster than the time to simulate a single network. This note seeks to remove the bottleneck in such simulations by providing a simple linear time algorithm to sample from the RDPG. 
\end{abstract}

%\begin{keywords}
%Random Dot Product Graph, Edge exchangeable, Simulation
%
%\end{keywords}

\section{Introduction}
The random dot product graph (RDPG) model serves as a test bed for various types of clustering and statistical inference algorithms. 
This model generalizes the Stochastic Blockmodel (SBM), the Overlapping SBM, the Mixed Membership SBM, the Degree-Corrected SBM, and the latent eigenmodel which is itself a generalization of the Latent Space Model \citep{Holland1983Stochastic,
Latouche2011Overlapping,
Airoldi2008Mixed,
PhysRevE.83.016107,
Hoff2007Modeling,hoff2002latent}. Under the RDPG, each node $i$ has a (latent) node feature $x_i\in \mathds{R}^K$ and the probability that node $i$ and $j$ share an edge is parameterized by $\langle x_i , x_j \rangle$ \citep{Young2007}. 

While many network analysis algorithms only require $O(m)$ operations, where $m$ is the number of edges in the graph, sampling RDPGs with the naive ``element-wise'' algorithm takes  $O(n^2)$ operations, where $n$ is the number of nodes.  In particular, sparse eigensolvers compute the leading $k$ eigenvectors of such graphs in $O(km)$ operations (e.g. in ARPACK \citep{lehoucqarpack}).  As such, sampling an RDPG is a computational bottleneck in simulations to examine many network analysis algorithms.
%Recently proven results give a central limit theorem for these eigenvectors under the bernoulli RDPG \citep{athreya2016limit, tang2016limit}. 

%This paper introduces a slight generalization of the RDPG model (gRPG) and then presents a simple algorithm \texttt{fastRG} for sampling a gRPG in $O(m)$ operations.  In sparse graphs, $O(m) = O(n)$, and the speed advantages of \texttt{fastRG} are pronounced. Our simulation is exact in a multi-edge graph where the number of edges between $i$ and $j$ is $Poisson(\langle x_i,x_j\rangle)$; for example, the Degree Corrected SBM was originally proposed to have Poisson multi-edges \citep{PhysRevE.83.016107}. \texttt{fastRG} is a generalization of the sampling algorithm proposed in \cite{PhysRevE.83.016107}. Section \ref{gRPG} proposes the gRPG. 

The organization of this paper is as follows.  Section \ref{fastRG} gives \texttt{fastRG}. Section \ref{sec:xlr} relates \texttt{fastRG} to xlr, a new class of edge-exchangeable random graphs with low-rank expectation.  Section \ref{gRPG} presents a generalization of the random dot product graph. Theorem \ref{theoremRDPG} shows that \texttt{fastRG} samples Poisson-edge graphs from this model.  Then, Theorem \ref{theoremThresh} in Section \ref{approximate} shows how \texttt{fastRG} can be used to approximate a certain class of Bernoulli-edge graphs.  Section \ref{sec:implementation} describes our implementation of \texttt{fastRG} (available at \url{https://github.com/karlrohe/fastRG}) and assesses the empirical run time of the algorithm.

%Section \ref{simulation} presents a simulation that illustrates the speed of \texttt{fastRG}. Section \ref{approximate} describes when \texttt{fastRG} can be used to approximate Bernoulli RDPG's, where the number of edges between $i$ and $j$ is $Bernoulli(\langle x_i,x_j \rangle)$. \texttt{R} code is available at \url{https://github.com/karlrohe/fastRG}.

\subsection{Notation}
Let $G = (V,E)$ be a graph with the node set $V = \{1, \ \dots,\ n\}$ and the edge set $E = \{(i,j): \mbox{$i$ is connected to $j$}\}$. In a directed graph, each edge is an ordered pair of nodes while in an undirected graph, each edge is an unordered pair of nodes. A multi-edge graph allows for repeated edges. In any graph, a self-loop is an edge that connects a node to itself. Let the adjacency matrix $A \in \mathds{R}^{n\times n}$ contain the number of edges from $i$ to $j$ in element $A_{ij}$.  For column vectors $x \in \mathds{R}^a$, $z \in \mathds{R}^b$ and $S \in \mathds{R}^{a \times b}$, define $\langle x, z \rangle _S = x^T S z$;  this function is not necessarily a proper inner product because it does not require that $S$ is non-negative definite. 
We use standard big-$O$ and little-$o$ notations,
i.e. for sequence $x_n$, $y_n$; $x_n = o(y_n)$ when $y_n$ is nonzero implies $\lim_{n\rightarrow \infty}x_n/y_n=0$; $x_n = O(y_n)$ implies there exists a positive real number $M$ and an integer $N$ such that $|x_n| \leq M|y_n|, ~\forall n \geq N.$
%\begin{equation*}
%|x_n| \leq M|y_n|, ~~~\forall n \geq N.
%\end{equation*}

\section{\texttt{fastRG}}
\label{fastRG}
\texttt{fastRG} is motivated by the 
 wide variety of low rank graph models that specify the expectation of the adjacency matrix as $E(A) = XSX^T$ for some matrix (or vector) $X$ and some matrix (or value) $S$.

\begin{table}[htbp]
   \centering
\begin{tabular}{|l|l|l|}
\hline Types of low rank models & $X \in$  & In each row...\\
\hline SBM & $\{0,1\}^{n\times K}$ & a single one  \\
Degree-Corrected SBM & $\mathds{R}^{n \times K}$ & a single non-zero positive entry \\
%& all other entries equal to zero. \\
Mixed Membership SBM & $\mathds{R}^{n\times K}$ & non-negative and sum to one \\
Overlapping SBM & $\{0,1\}^{n \times K}$ & a mix of $1$s and $0$s \\
Erd\H{o}s-R\'enyi  & $\{1\}^n$ &  a single one\\
Chung-Lu  & $\mathds{R}^n$ &  a single positive value\\
\hline \end{tabular}
   \caption{Restrictions on the matrix $X$ create different types of well known low rank models.  
%   This table is meant to highlight the similarities between these models.  
   There are further differences between these models that are not emphasized by this table.}
   \label{tab:blockmodels}
\end{table}

Given $X \in \mathds{R}^{n \times K_x}$, $Y \in \mathds{R}^{d \times K_y}$ and $S \in \mathds{R}^{K_x \times K_y}$,
%Given matrices $X \in \mathds{R}^{n \times K}$ and $S \in \mathds{R}^{K \times K}$,
 \texttt{fastRG} samples a random graph whose $n\times d$ incidence matrix has expectation $XSY^T$.  Importantly, \texttt{fastRG} requires that the elements of $X, Y,$ and $S$ are non-negative.  This condition holds for all of the low rank models in the above table. 
% \footnote{
%As it was originally proposed, the Overlapping SBM does not create an adjacency matrix with expectation $XSX^T$.  Instead, each element of that matrix is transformed by the logistic function.
%}, 
%and the Mixed Membership SBM.  
Each of those models set $Y = X$ and enforce different restrictions on the matrix $X$.

As stated below, \texttt{fastRG} samples a (i) directed graph with (ii) multiple edges and (iii) self-loops. After sampling, these properties can be modified to create a simple graph (undirected, no repeated edges, and no self-loops);  see Remarks \ref{remark:undirected} and \ref{remark:selfloops} in Section \ref{gRPG} and Theorem \ref{theoremThresh} in Section \ref{approximate}.

\begin{algorithm}
\caption{\texttt{fastRG}($X,S,Y$)}
    \begin{algorithmic}
        \Require $X \in \mathds{R}^{n \times K_x}$, $S \in \mathds{R}^{K_x \times K_y}$, and $Y \in \mathds{R}^{d \times K_y}$  with all matrices containing non-negative entries.
%        \Function {sampleRDPG}{$X$}
%            \State Calculate the column sums of $X$, $\sum_iX_{iu}$ for $u =1,$ $\dots,$ $K$. 
            \State Compute diagonal matrix $C_X \in \mathds{R}^{K_x \times K_x}$ with $C_X=diag(\sum_i X_{i1} \ , \  \dots \ , \ \sum_i X_{iK_x})$.
            \State Compute diagonal matrix $C_Y \in \mathds{R}^{K_y \times K_y}$ with $C_Y=diag(\sum_i Y_{i1} \ , \  \dots \ , \ \sum_i Y_{iK_y})$.
%            \State Place the $u$th column sum in the $u,u$ element of diagonal matrix $C \in \mathds{R}^{K \times K}$.
            \State Define $\tilde X = X C_X^{-1}$,  $\tilde S = C_X S C_Y$, and $\tilde Y = Y C_Y^{-1}$.  
%            C_u = \sum_iX_{iu}$ for $u =1,$ $\dots,$ $K$.
%            \State Calculate $V_u = (\sum_iX_{iu})^2$ for $u =1,$ $\dots,$ $K$.
            \State Sample the number of edges $m \sim Poisson(\sum_{u,v} \tilde S_{uv})$.
%            \State $EdgeIn \gets NULL$
%            \State $EdgeOut \gets NULL$
            \For {$\ell = 1:m $}
                \State Sample $U \in \{1,\ \dots,\ K_x\}, V \in \{1,\ \dots,\ K_y\}$ with $\mathbb{P}(U = u, V=v) \propto \tilde S_{uv}$.
                \State Sample $I \in \{1,\ \dots,\ n\}$ with $\mathbb{P}(I=i) = \tilde X_{iU}$.
                \State Sample $J \in \{1,\ \dots,\ d\}$ with $\mathbb{P}(J=j) = \tilde Y_{jV}$.
%                \State Sample node $J$ from $1,\dots,n$ with probability $\mathbb{P}(J=j) = X_{jU}$
                \State Add edge $(I,J)$ to the graph, allowing for multiple edges $(I,J)$.
            \EndFor
%                       \State $A \gets sparseMatrix(EdgeIn, EdgeOut, 1)$
%        \EndFunction
    \end{algorithmic}
\end{algorithm}

An implementation in \texttt{R} is available at \url{https://github.com/karlrohe/fastRG}.  As discussed in Section \ref{sec:implementation}, in order to make the algorithm more efficient, the implementation is slightly different from the statement of the algorithm above.

There are two model classes that can help to interpret the graphs generated from \texttt{fastRG} and those model classes are explored in the next two subsections.  Throughout all of the discussion, the key fact that is exploited by \texttt{fastRG} is given in the next Theorem.

\begin{theorem}\label{theorem:xlr}
Suppose that $X \in \mathds{R}^{n \times K_x}$, $Y \in \mathds{R}^{d \times K_y}$ and $S \in \mathds{R}^{K_x \times K_y}$  all contain non-negative entries. Define $x_i \in \mathds{R}^{K_x}$ as the $i$th row of $X$. Define $y_j \in \mathds{R}^{K_y}$ as the $j$th row of $Y$.  Let $(I,J)$ be a single edge sampled inside the for loop in \texttt{fastRG}$(X,S,Y)$, then
\[\mathbb{P}\left((I,J) = (i,j)\right) \propto \langle x_i,y_j \rangle _S.\]
\end{theorem}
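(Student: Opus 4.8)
The plan is to condition on the block indices $(U,V)$ drawn in the first line of the \texttt{for} loop and then apply the law of total probability. First I would observe that, conditional on $(U,V) = (u,v)$, the node index $I$ is sampled from $\mathbb{P}(I = i \mid U = u) = \tilde X_{iu}$ and, independently, $J$ is sampled from $\mathbb{P}(J = j \mid V = v) = \tilde Y_{jv}$. These are legitimate probability distributions precisely because the normalization $\tilde X = X C_X^{-1}$, $\tilde Y = Y C_Y^{-1}$ forces each column of $\tilde X$ and of $\tilde Y$ to sum to one (this also quietly uses that each column sum of $X$ and of $Y$ is strictly positive, so that $C_X,C_Y$ are invertible; a column that is identically zero can simply be deleted). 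Likewise $\mathbb{P}(U=u,V=v) = \tilde S_{uv}/\sum_{u',v'}\tilde S_{u'v'}$ is well defined since all entries are non-negative.

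Combining these, the total probability over the $K_x K_y$ choices of $(u,v)$ gives
\[
\mathbb{P}\big((I,J)=(i,j)\big)
 = \sum_{u=1}^{K_x}\sum_{v=1}^{K_y} \mathbb{P}(U=u,V=v)\,\tilde X_{iu}\,\tilde Y_{jv}
 = \frac{1}{\sum_{u',v'}\tilde S_{u'v'}}\sum_{u,v} \tilde S_{uv}\,\tilde X_{iu}\,\tilde Y_{jv}.
\]
Then I would substitute $\tilde S_{uv} = (C_X)_{uu} S_{uv} (C_Y)_{vv}$, $\tilde X_{iu} = X_{iu}/(C_X)_{uu}$, $\tilde Y_{jv} = Y_{jv}/(C_Y)_{vv}$, where $(C_X)_{uu} = \sum_i X_{iu}$ and $(C_Y)_{vv} = \sum_j Y_{jv}$. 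The diagonal factors cancel term by term inside the double sum, leaving $\sum_{u,v} X_{iu} S_{uv} Y_{jv} = x_i^T S y_j = \langle x_i, y_j\rangle_S$. Since the leftover factor $1/\sum_{u',v'}\tilde S_{u'v'}$ is independent of $(i,j)$, this is exactly the claimed proportionality.

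As a consistency check on the constant I would also note $\sum_{i,j}\langle x_i,y_j\rangle_S = \sum_{u,v}\big(\sum_i X_{iu}\big)S_{uv}\big(\sum_j Y_{jv}\big) = \sum_{u,v}\tilde S_{uv}$, so the per-edge law is precisely $\langle x_i,y_j\rangle_S/\sum_{u,v}\tilde S_{uv}$ — which dovetails with sampling $m\sim \mathrm{Poisson}(\sum_{u,v}\tilde S_{uv})$ total edges and will be the hook for the later Poisson-thinning argument in Theorem~\ref{theoremRDPG}. There is no genuine obstacle in this proof; the only thing demanding care is the bookkeeping of which $C_X$ and $C_Y$ factors cancel against which, together with the implicit positivity of the column sums that makes the normalizations valid.
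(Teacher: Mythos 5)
Your proposal is correct and follows essentially the same route as the paper's proof: condition on $(U,V)$, apply the law of total probability, and cancel the $C_X$, $C_Y$ normalization factors to recover $\sum_{u,v} X_{iu} S_{uv} Y_{jv} = \langle x_i, y_j\rangle_S$ up to a constant. Your added remarks on the invertibility of $C_X, C_Y$ and the identification of the normalizing constant $\sum_{u,v}\tilde S_{uv}$ are sound but not needed beyond what the paper already does.
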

\begin{proof}
\begin{eqnarray*} 
\mathbb{P}\big((I,J)=(i,j)\big) &=& \sum_{u,v} \mathbb{P}((I,J)=(i,j)|(U,V)=(u,v)) \ \mathbb{P}((U,V)=(u,v))) \\
&=& \frac{\sum_{u,v} \tilde X_{iu} \tilde Y_{jv}\tilde S_{uv}}{\sum_{u,v} \tilde S_{u,v}} = \frac{ \sum_{u,v} X_{iu} Y_{jv} S_{uv}}{\sum_{u,v} \tilde S_{u,v}} = \frac{x_i^TS y_j}{\sum_{a,b}  x_a^TS y_b}
\end{eqnarray*}
\end{proof}

\noindent
 For notational simplicity, the rest of the paper will suppose that $Y = X \in \mathds{R}^{n \times K}$.

\subsection{\texttt{fastRG} samples from xlr: a class of edge-exchangeable random graphs}\label{sec:xlr}

There has been recent interest in edge exchangeable graph models with blockmodel structure (e.g. \cite{NIPS2016_6521, 2016arXiv160202114T}). To characterize a broad class of such models, we propose xlr.

\begin{definition}\label{def:xlr}[xlr] An xlr graph on $n$ nodes and $K$ dimensions is generated as follows,
\begin{enumerate}
\item Sample $(X,S) \in   \mathds{R}^{n \times K} \times \mathds{R}^{K \times K}$ from some distribution and define $x_i$ as the $i$th row of $X$.
%\item Sample $(S, x_1, \dots, x_n) \in \mathds{R}^{K \times K} \times  \mathds{R}^{K} \times \dots \times \mathds{R}^{K}$ from some distribution.
%\item Sample both (i) $S \in R^{K \times K}$ and (ii) $x_i \in R^K$  for $i=1,\dots, n$, from some joint distribution.
\item Initialize the graph to be empty.
\item Add independent edges $e_1, e_2, \dots$ to the graph, where 
\[\mathbb{P}(e_\ell = (i,j)) = \frac{\langle x_i,x_j \rangle _S }{ \sum_{a,b} \langle x_a,x_b \rangle _S}.\]
\end{enumerate}
\end{definition}
From Theorem \ref{theorem:xlr}, \texttt{fastRG} provides a way to sample the edges in xlr.

An xlr graph is both (i) edge-exchangeable as defined by \cite{crane2016edge} and (ii) conditional on $X$ and $S$, its expected adjacency matrix is low rank.  By sampling $X$ to satisfy one set of restrictions specified in Table \ref{tab:blockmodels}, xlr provides a way to sample edge exchangeable blockmodels.    xlr stands for \textit{edge-exchangeable and low rank} because it characterizes all edge-exchangeable and low rank random graph models on a finite number of nodes.  
In particular, by Theorem 4.2 in \cite{crane2016edge}  if a random undirected graph with an infinite number of edges is edge exchangeable, then the edges are drawn iid from some randomly chosen distribution on edges $f$.  Moreover, let $B$ be the adjacency matrix of a single edge drawn from $f$. Under the assumption that $E(B|f)$ is rank $K$, there exist matrices $X \in \mathds{R}^{n \times K}$ and $S \in \mathds{R}^{K \times K}$ that are a function of $f$ and give the eigendecomposition  $E(B|f) = X S X^T$.  This implies that $\mathbb{P}(e_1 = (i,j) | f) \propto \langle x_i,x_j \rangle _S$, where $x_i$ is the $i$th row of $X$.

\subsection{\texttt{fastRG} samples from a generalization of the RDPG}
\label{gRPG}

Under the RDPG as described in \cite{Young2007}, the expectation of the adjacency matrix is $XX^T$ for some matrix $X \in \mathds{R}^{n \times K}$.  This implies that the expected adjacency matrix is always non-negative definite (i.e. its eigenvalues are non-negative).  However,  some parameterizations of the SBM (and other blockmodels) lead to an expected adjacency matrix with negative eigenvalues (i.e. it is not non-negative definite); for example, if the off-diagonal elements of $S$ are larger than the diagonal elements, then $XSX^T$ could have negative eigenvalues.  Moreover, even if the elements of $X$ and $S$ are positive, as is the case for the low rank models in Table \ref{tab:blockmodels} and as is required for \texttt{fastRG}, it is still possible for $XSX^T$ to have negative eigenvalues.
By modifying the RDPG to incorporate a matrix $S$, the model class below incorporates all types of blockmodels.

%This section introduces to classes of random graph models that are related to \texttt{fastRG}. The first class of models generalizes the RDPG to include all Stochastic Blockmodels.  
%The  gives us a language to discuss and compare different models, but 
%As will become clear below, \texttt{fastRG} cannot sample from all possible gRPG models
%Importantly, this generalization is not a key interest of the paper.  Rather, 
\begin{definition}
\label{def:gRPG}
[Generalized Random Product Graph (gRPG) model] 
%Consider a graph $G = (V,E)$ with $V = \{1,\ \dots,\ n\}$. 
For $n$ nodes in $K$ dimensions, the gRPG is parameterized by $X \in \mathds{R}^{n \times K}$ and $S \in \mathds{R}^{K \times K}$, where each node $i$ is assigned the $i$th row of $X$, $x_i = (X_{i1},\ \dots,\ X_{iK})^T \in \mathds{R}^K$.  
%$S$ is a real symmetric matrix. 
For $i,j \in V$, define 
\[\lambda_{ij} = \langle x_i,x_j \rangle _S = \sum_k^K \sum_l^K X_{ik} S_{kl} X_{jl}.\]
Under the gRPG, the adjacency matrix $A\in \mathds{R}^{n\times n}$ contains independent elements and the distribution of $A_{ij}$ (i.e. the number of edge from $i$ to $j$) is fully parameterized by $f(\lambda_{ij})$, where $f$ is some mean function.
\end{definition}
%In particular, the gRPG does not imply that the matrix with elements $\lambda_{ij}$ is non-negative definite.  This is the key difference between the gRPG and the RDPG. 
Below, we will use the fact that the gRPG only requires that the $\lambda_{ij}$s specify the distribution of $A_{ij}$, allowing for $A_{ij}$ to be non-binary (as in multi-graphs and  weighted graphs) or to have edge probabilities which are a function of $\lambda_{ij}$.

%The next section defines \texttt{fastRG}.  Then, Theorem \ref{theorem:xlr} shows that \texttt{fastRG} samples edges with the same distribution as xlr. Theorem \ref{theoremRDPG} shows that  \texttt{fastRG} generates a graph which is a sample from the Poisson gRPG, with the identity mean function.
%Then, to get a graph without multiple edges, 
%If this graph is then post processed to collapses any repeated edges into a single edge, then the resulting graph is a ``Bernoulli gRPG" with mean function $f(\lambda) = 1- \exp(-\lambda)$, i.e. $A_{ij} \sim Bernoulli(f(\lambda_{ij}))$.   Theorem \ref{theoremThresh} shows that if the graph is sufficiently sparse, then this graph approximates the Bernoulli gRPG with the identity mean function. 

\begin{theorem} 
For $X \in \mathds{R}^{n \times K}$ and $S\in \mathds{R}^{K \times K}$, each with non-negative elements, if $\tilde A$ is the adjacency matrix of a graph sampled with \texttt{fastRG}($X,S$), then $\tilde A$ is a Poisson gRPG with $\tilde A_{ij} \sim Poisson(\langle x_i,x_j \rangle _S )$.
\label{theoremRDPG}
\end{theorem}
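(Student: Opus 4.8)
The plan is to recognize Theorem~\ref{theoremRDPG} as an instance of the classical Poisson splitting (``colouring'') property of Poisson random variables, fed by the per-edge law already computed in Theorem~\ref{theorem:xlr}. First I would record two preliminary identities. Because $C_X = \mathrm{diag}(\sum_i X_{i1},\dots,\sum_i X_{iK})$ (and $C_Y=C_X$ since $Y=X$), the columns of $\tilde X = X C_X^{-1}$ are genuine probability vectors, $\sum_i \tilde X_{iu} = \sum_i X_{iu}/C_{X,uu} = 1$, so the sampling steps for $I$ and $J$ in the for loop are well defined. Second,
\[
\sum_{u,v}\tilde S_{uv} = \sum_{u,v} C_{X,uu}\, S_{uv}\, C_{Y,vv} = \sum_{u,v}\Big(\sum_i X_{iu}\Big) S_{uv}\Big(\sum_j X_{jv}\Big) = \sum_{i,j} x_i^{T} S x_j = \sum_{i,j}\langle x_i,x_j\rangle_S =: \Lambda,
\]
so in \texttt{fastRG}$(X,S)$ the number of edges is $m \sim \mathrm{Poisson}(\Lambda)$.

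Next I would note that the $m$ iterations of the for loop use fresh, mutually independent randomness, so conditionally on $m$ the sampled edges $(I_1,J_1),\dots,(I_m,J_m)$ are i.i.d.; and by the computation in the proof of Theorem~\ref{theorem:xlr} (with $Y=X$) each has law $\mathbb{P}\big((I_\ell,J_\ell)=(i,j)\big) = \langle x_i,x_j\rangle_S/\Lambda =: p_{ij}$, where $\sum_{i,j}p_{ij}=1$ by the identity above. The adjacency matrix records $\tilde A_{ij} = \#\{\ell : (I_\ell,J_\ell)=(i,j)\}$. The core step is then the splitting computation: for nonnegative integers $a_{ij}$ with $a := \sum_{i,j}a_{ij}$, conditioning on $m=a$ gives a multinomial count, so
\[
\mathbb{P}\big(\tilde A_{ij}=a_{ij}\ \forall\, i,j\big) = e^{-\Lambda}\frac{\Lambda^{a}}{a!}\cdot\frac{a!}{\prod_{ij} a_{ij}!}\prod_{i,j} p_{ij}^{a_{ij}} = \prod_{i,j} e^{-\Lambda p_{ij}}\,\frac{(\Lambda p_{ij})^{a_{ij}}}{a_{ij}!},
\]
where the last equality uses $\sum_{ij}p_{ij}=1$ to distribute $e^{-\Lambda}$ and $\Lambda^{a}$ across the product. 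This factorization shows simultaneously that the $\tilde A_{ij}$ are independent and that $\tilde A_{ij}\sim \mathrm{Poisson}(\Lambda p_{ij}) = \mathrm{Poisson}(\langle x_i,x_j\rangle_S)$; comparing with Definition~\ref{def:gRPG} (with $f$ the identity mean function) identifies $\tilde A$ as a Poisson gRPG, as claimed.

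I do not expect a serious obstacle; the work is almost entirely bookkeeping. The points needing care are: making explicit that the loop iterations are i.i.d. so that the law of the counts given $m$ is exactly multinomial; handling the degenerate case $\Lambda=0$ separately (then $m=0$ almost surely and $\tilde A\equiv 0$, consistent with $\mathrm{Poisson}(0)$); and pinning down that the ``$\propto$'' of Theorem~\ref{theorem:xlr} becomes the exact constant $1/\Lambda$ because the support of $(I,J)$ is all of $\{1,\dots,n\}^2$ with normalizing sum $\Lambda$. One could instead invoke the standard Poisson colouring theorem as a black box, but the one-line pmf computation above keeps the argument self-contained.
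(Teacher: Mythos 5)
Your proposal is correct and takes essentially the same route as the paper: both hinge on the identity $\sum_{u,v}\tilde S_{uv}=\sum_{i,j}\langle x_i,x_j\rangle_S$, the per-edge law from Theorem~\ref{theorem:xlr}, and the Poisson-total-plus-multinomial-split structure of the algorithm. The only difference is cosmetic: the paper conditions both $\tilde A$ and the target Poisson gRPG on the total edge count and matches the two conditional (multinomial) laws via Lemma~\ref{lemma1}, whereas you run the same splitting identity forward and factor the joint pmf of $\tilde A$ directly into a product of $\mathrm{Poisson}(\langle x_i,x_j\rangle_S)$ masses.
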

\noindent The proof is contained in the appendix.

\begin{remark}[Simulating an undirected graph]
\label{remark:undirected}
As defined, both the gRPG model and \texttt{fastRG} generate directed graphs. An ``undirected gRPG'' should add a constraint to Definition \ref{def:gRPG} that $A_{ij} = A_{ji}$ for all $i,j$.  To sample such a graph with \texttt{fastRG}, input $S/2$ instead of $S$, then after sampling a directed graph with \texttt{fastRG}, symmetrize each edge by removing its direction (this doubles the probability of an edge, hence the need to input $S/2$).  Theorem \ref{theoremRDPG} can be easily extended to show this is an undirected gRPG.
\end{remark}

\begin{remark}[Simulating a graph without self-loops]
\label{remark:selfloops}
As defined, both the gRPG model and \texttt{fastRG} generate graphs with self-loops. A ``gRPG without self-loops'' should add a constraint to Definition \ref{def:gRPG} that $A_{ii} = 0$ for all $i$.  A graph from \texttt{fastRG} can be converted to a gRPG without self-loops by simply (1) sampling $m \sim Poisson(\sum_{u,v} \tilde S_{uv} - \sum_i \langle x_i, x_i \rangle_S)$ and (2) resampling any edge that is a self-loop.
%Remark \ref{remark:selfloopproof} after 
The proof of Theorem \ref{theoremRDPG} can be extended to show that this is equivalent.
\end{remark}
%\newpage

\subsection{Approximate Bernoulli-edges}
\label{approximate} 
To create a simple graph with \texttt{fastRG} (i.e. no multiple edges, no self-loops, and undirected), first sample a graph with \texttt{fastRG}.  Then, perform the modifications described in Remarks \ref{remark:undirected} and \ref{remark:selfloops}. Then, keep an edge between $i$ and $j$ if there is at least one edge in the multiple edge graph;  define the threshold function, $t(A_{ij})=\mathds{1}(A_{ij}>0)$, where $t(A)$ applies element-wise. 

If $\tilde A$ is a Poisson gRPG, then $t(\tilde A)$ is a Bernoulli gRPG with mean function $f(\lambda_{ij}) = 1-exp(-\lambda_{ij})$.  
%This section shows that in sparse graphs, $t(\tilde A)$ is an approximation to the Bernoulli edge model with identity mean function, $\mathbb{P}((i,j) \in E) = \lambda_{ij}$.
%
%
%If $\tilde A$ is sampled with \texttt{fastRG}($X,S$), then $\tilde A_{ij} \sim Poisson(\lambda_{ij})$ and $t(\tilde A_{ij}) \sim Bernoulli(1-exp(-\lambda_{ij}))$.  
Let   $B$ be distributed as Bernoulli gRPG($X,S$) with identity mean function, $B_{ij} \sim Bernoulli(\lambda_{ij})$.  Theorem \ref{theoremThresh} shows that in the sparse setting, there is a coupling between $t(\tilde A)$ and $B$ such that $t(\tilde A)$ is approximately equal to $B$.  The theorem is asymptotic in $n$; a superscript of $n$ is suppressed on $\tilde A, B$ and $\lambda$. 

%in this section shows that if you simply threshold any multiple edges from \texttt{fastRG} into single edges, then under certain conditions this is an approximate Bernoulli edge graph with edge probabilities $\lambda_{ij}$. Let ${A}_{ij} \sim Poisson(\lambda_{ij})$ and ${B}_{ij} \sim Bernoulli(\lambda_{ij})$. When $\lambda_{ij}$ is small, as it is in sparse graphs, then $\mathbb{P}({A}_{ij} = u)$ $\approx$ $\mathbb{P}({B}_{ij}=u)$ for $u$ $\in$ $\{0,1\}$. By this fact, we can approximate a sparse Bernoulli edge graph by a Poisson edge graph when the edge probabilities are sufficiently small. (This notion is made rigorous in Theorem \ref{theoremThresh} below.) So our ``approximation" algorithm only needs to add this threshold: If there are multiple edges between $i$ and $j$, replace them with a single edge. %$t(A)$ gives a Bernoulli-edge graph.

%\subsection{Asymptotic property}

%To study when $t(A)$ is a good approximation to $B$, we will give conditions for $E\|t(A)-B\|_F^2 = o(E\|B\|_F^2)$, where $\|\cdot\|_F$ is the Frobenius norm. We couple $t(A)$ and $B$ in order to make them dependent in a way that makes the left hand side small. The interpretation is that if you threshold a \texttt{fastRG} graph, then you are close to a Bernoulli graph that you could have generated in a (slower) element-wise fashion. By coupling, we try to find the condition, under which $E\|\mathcal{A}-\mathcal{B}\|_F^2 = o(E\|\mathcal{B}\|_F^2)$.

\begin{theorem} \label{theoremThresh}
Let $\tilde A$ be a Poisson gRPG and let $B$ be a Bernoulli gRPG using the same set of $\lambda_{ij}$s, with $\tilde A_{ij} \sim Poisson(\lambda_{ij})$ and ${B}_{ij}\sim Bernoulli(\lambda_{ij})$.  Let $t(\cdot)$ be the thresholding function for $\tilde A$.

Let $\alpha_n$  be a sequence.  If $\lambda_{ij} = O(\alpha_n/n)$ for all $i,j$ and there exists some constant $c>0$ and $N>0$ such that $\sum_{ij} \lambda_{ij} > c \alpha_n n$ for all $n>N$, then there exists a coupling between $t(\tilde A)$ and $B$ such that 
\[\frac{E\|t(\tilde A) - B\|^2_F}{E\|B\|^2_F}  = O(\alpha_n/n).\]
\end{theorem}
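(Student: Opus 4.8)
The plan is to build the coupling one entry at a time using the maximal coupling of two Bernoulli variables, and then control the resulting disagreement probability by a second-order Taylor estimate. First I would record the two elementary facts that make the Frobenius norms tractable. Since $B_{ij}\in\{0,1\}$, we have $E\|B\|_F^2=\sum_{ij}E B_{ij}^2=\sum_{ij}\lambda_{ij}$, and since both $t(\tilde A_{ij})$ and $B_{ij}$ take values in $\{0,1\}$, $\|t(\tilde A)-B\|_F^2=\sum_{ij}\mathds{1}\{t(\tilde A_{ij})\neq B_{ij}\}$, hence $E\|t(\tilde A)-B\|_F^2=\sum_{ij}\mathbb{P}(t(\tilde A_{ij})\neq B_{ij})$. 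So the theorem reduces to making $\sum_{ij}\mathbb{P}(t(\tilde A_{ij})\neq B_{ij})$ small relative to $\sum_{ij}\lambda_{ij}$.

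Next I would construct the coupling. Because the entries of $\tilde A$ are independent and the entries of $B$ are independent, it suffices to couple each pair $(t(\tilde A_{ij}),B_{ij})$ separately and take the product measure over $(i,j)$. As noted just before the theorem, $t(\tilde A_{ij})\sim Bernoulli(1-e^{-\lambda_{ij}})$ while $B_{ij}\sim Bernoulli(\lambda_{ij})$; using the maximal coupling of these two Bernoulli laws gives $\mathbb{P}(t(\tilde A_{ij})\neq B_{ij})=d_{TV}\bigl(Bernoulli(1-e^{-\lambda_{ij}}),Bernoulli(\lambda_{ij})\bigr)=\bigl|\lambda_{ij}-(1-e^{-\lambda_{ij}})\bigr|$. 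Since $1-e^{-\lambda}\le\lambda$ for $\lambda\ge 0$, the absolute value is just $\lambda_{ij}-1+e^{-\lambda_{ij}}$, and the one-line convexity bound $\lambda-1+e^{-\lambda}\le\lambda^2/2$ yields $\mathbb{P}(t(\tilde A_{ij})\neq B_{ij})\le\lambda_{ij}^2/2$.

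Finally I would combine the pieces. The hypothesis $\lambda_{ij}=O(\alpha_n/n)$ supplies a constant $M$ and an integer $N$ with $\lambda_{ij}\le M\alpha_n/n$ for all $i,j$ and all $n>N$, hence $\lambda_{ij}^2\le (M\alpha_n/n)\lambda_{ij}$ and $\sum_{ij}\lambda_{ij}^2\le (M\alpha_n/n)\sum_{ij}\lambda_{ij}$. Dividing,
\[
\frac{E\|t(\tilde A)-B\|_F^2}{E\|B\|_F^2}\le\frac{\tfrac12\sum_{ij}\lambda_{ij}^2}{\sum_{ij}\lambda_{ij}}\le\frac{M\alpha_n}{2n}=O(\alpha_n/n),
\]
where the lower bound $\sum_{ij}\lambda_{ij}>c\alpha_n n$ is used only to ensure the denominator is nonzero (and that the statement is non-vacuous). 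I do not anticipate a serious obstacle: the only points that need care are checking that the entrywise maximal couplings genuinely assemble into a valid joint law for the two matrices — immediate from independence of the entries (in the undirected case one couples the independent upper-triangular entries and reflects) — and verifying the sign resolution in $|\lambda-(1-e^{-\lambda})|$ together with the quadratic remainder bound, both of which are elementary. It is worth noting that the conclusion is informative only in the sparse regime $\alpha_n=o(n)$, which is also the regime in which $\lambda_{ij}\le 1$ holds eventually, so that $Bernoulli(\lambda_{ij})$ is well defined.
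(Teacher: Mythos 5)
Your proposal is correct and takes essentially the same route as the paper: the paper's explicit coupling via common uniforms $U_{ij}$ is exactly the entrywise maximal coupling you invoke (assembled across entries by independence), and both arguments hinge on the quadratic estimate $\lambda_{ij}-(1-e^{-\lambda_{ij}})=O(\lambda_{ij}^2)$ for the per-entry disagreement probability. The only difference is a small refinement in the final bookkeeping: you bound $\sum_{ij}\lambda_{ij}^2\le (M\alpha_n/n)\sum_{ij}\lambda_{ij}$ so the ratio cancels and the hypothesis $\sum_{ij}\lambda_{ij}>c\alpha_n n$ is needed only to keep the denominator positive, whereas the paper bounds the numerator by $O(\alpha_n^2)$ and uses that lower-bound hypothesis to control the denominator.
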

\noindent
A proof is contained in the appendix.

For example, in the sparse graph setting where $\lambda_{ij} = O(1/n)$ and $\sum_{ij} \lambda_{ij} = O(n)$, $\alpha_n = 1$.   Under this setting and the coupling defined in the proof, all of the $O(n)$ edges in $t(\tilde A)$ are contained in $B$ and $B$ has an extra $O(1)$ more edges than $t(\tilde A)$. 
%In this section, we construct a coupling between $t(A)$ and $B$ ($B_{ij}$ = the $i,j$ th element of $B$), and show that these two adjacency matrices are very close in Frobenius norm.

\subsection{Implementation of \texttt{fastRG}}\label{sec:implementation}
%\label{discussion}
%\texttt{fastRG} provides a fast way to exactly simulate Poisson gRPG's.  With thresholding, \texttt{fastRG} provides a way to simulate a Bernoulli graph.  Under certain assumptions described in Theorem 2, $t(A)$ is very close to the Bernoulli gRPG with edge probabilities $\lambda_{ij}$.  

Code at \url{https://github.com/karlrohe/fastRG} gives an implementation of \texttt{fastRG} in \texttt{R}.  It also provides wrappers  that simulate the SBM, Degree Corrected SBM, Overlapping SBM, and Mixed Membership SBM.  
The code for these models first generates the appropriate $X$ and then calls \texttt{fastRG}.    In order to help control the edge density of the graph, \texttt{fastRG} and its wrappers can be given an additional argument avgDeg.  If avgDeg is given, then the matrix $S$ is scaled so that \texttt{fastRG} simulates a graph with expected average degree equal to avgDeg.  Without this, parameterizations can easily produce very dense graphs.  

To accelerate the running time of \texttt{fastRG}, the implementation is slightly different than the statement of the algorithm above.  The difference can be thought of as sampling all of the $(U,V)$ pairs before sampling any of the $I$s or $J$s. In particular, the implementation samples $\varpi \in R^{K \times K}$ as multinomial$(m, \tilde S / \sum_{u,v} \tilde S_{uv})$.  Then, for each $u\in \{1, \dots K\}$, it samples $\sum_v \varpi_{u,v}$-many $I$s from the distribution $\tilde X_{\cdot u}$.  Similarly,   for each $v\in \{1, \dots K\}$, it samples $\sum_u \varpi_{u,v}$-many $J$s from the distribution $\tilde X_{\cdot v}$. Finally, the indexes are appropriately arranged so that there are $\varpi_{u,v}$-many edges $(I,J)$ where $I \sim X_{\cdot u}$ and $J \sim X_{\cdot v}$.  Recall that the statement of \texttt{fastRG} above allows for $X$ and $Y$, where those matrices can have different numbers of rows and/or columns; the implementation also allows for this.  

Under the SBM, it is possible to use \texttt{fastRG} to sample from the Bernoulli gRPG with the \textit{identity} mean function instead of the mean function $1 - \exp(-\langle x_i, x_j \rangle_S)$ that is created by the thresholding  function $t$ from Section \ref{approximate}.  The wrapper for the SBM does this by first transforming each element of $S$ as $-\ln(1-S_{ij})$ and then calling \texttt{fastRG}. The others models are not amenable to this trick; by default, they sample from the Poisson gRPG with identity mean function.

\subsection{Experiments}
\label{simulation}
%The experiments in Figure \ref{Figure1} give the time for \texttt{fastRG} and the naive element-wise method for generating a sparse RDPG as a function of $n$.  In the element-wise method, each $\lambda_{ij}$ is individually computed and each $A_{ij}$ is individually  simulated.  Figure \ref{Figure1} shows that 
%
%and \ref{Figure2} 
To examine the running time of \texttt{fastRG}, we simulated a range of different values of $n$ and $E(m)$, where $E(m)$ is the expected number of edges.  In all simulations $X=Y$ and $K = 5$.  The elements of $X$ are independent $Poisson(1)$ random variables and the elements of $S$ are independent $Uniform[0,1]$ random variables.  To specify $E(m)$, the parameter avgDeg is set to $E(m)/n$. The values of $n$ range from $10,000$ to $10,000,000$ and the values of $E(m)$ range from $100,000$ to $100,000,000$.  The graph was taken to be directed, with self-loops and multiple edges.  Moreover, the reported times are only to generate the edge list of the random graph; the edge list is not converted into a sparse adjacency matrix, which in some experiments would have more than doubled the running time.  Each pair of $n$ and $E(m)$ is simulated one time; deviations around the trend lines indicate the variability in run time.

In Figure \ref{fig1}, the vertical axes present the running time in $\texttt{R}$ on a Retina 5K iMac, 27-inch, Late 2014 with 3.5 GHz Intel i5 and 8GB of 1600 MHz DDR3 memory. In the left panel of Figure \ref{fig1}, each line corresponds to a single value of $n$ and $E(m)$ increases along the horizontal axis. In the right panel of Figure \ref{fig1}, each line corresponds to a single value of $E(m)$ and $n$ increases along the horizontal axis.  All axes are on the $log_{10}$ scale.  The solid black line has a slope of 1.  Because the data aligns with this black line, this suggests that \texttt{fastRG} runs in linear time.

The computational bottleneck is sampling the $I$s and $J$s. The implementation uses Walker's Alias Method \citep{walker1977efficient} (via \texttt{sample} in \texttt{R}).  To take $m$ samples from a distribution over $n$ elements, Walker's Alias Method  requires $O(m + \ln(n)n)$ operations  \citep{vose1991linear}.  
 However, the log dependence is not clearly evident in the right plot of Figure \ref{fig1}; perhaps it would be visible for larger values of $n$.

\begin{figure}[!ht]
  \centering
  \subfigure{
    \includegraphics[width=7.2cm]{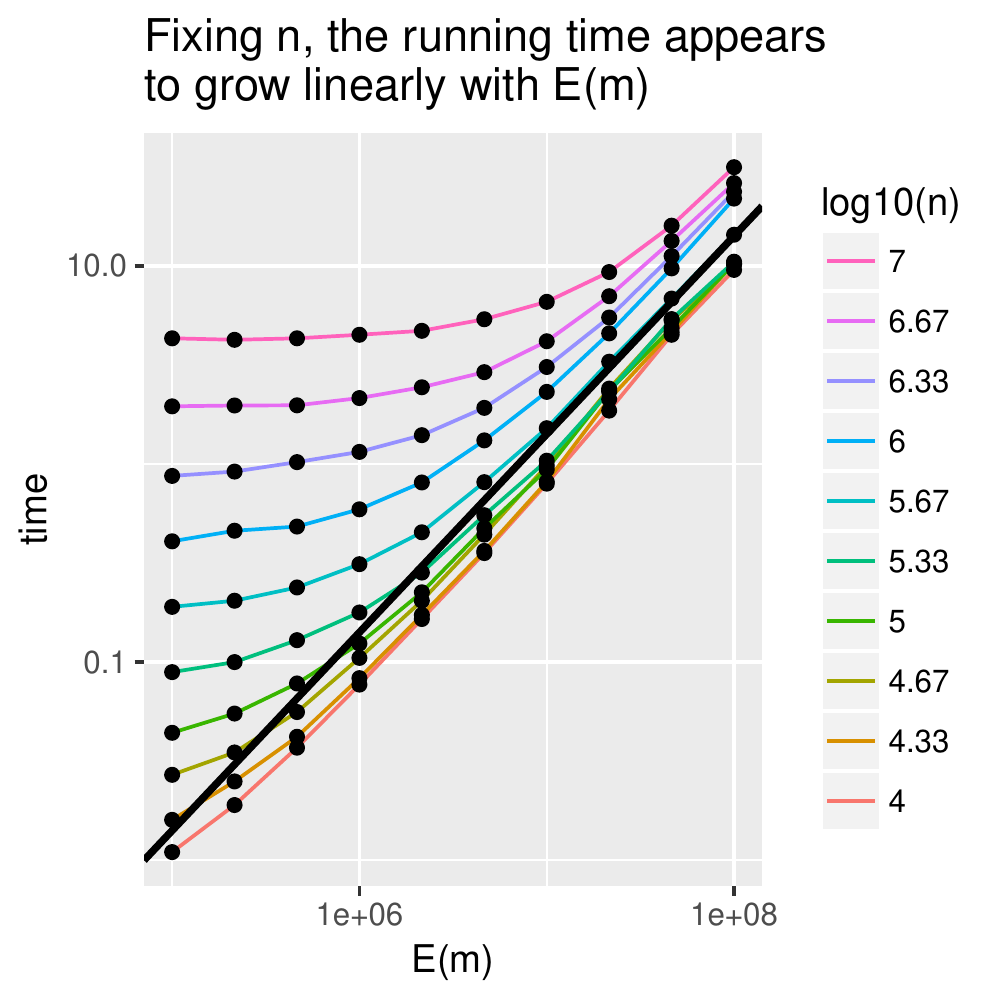}
    }
  \subfigure{
    \includegraphics[width=7.2cm]{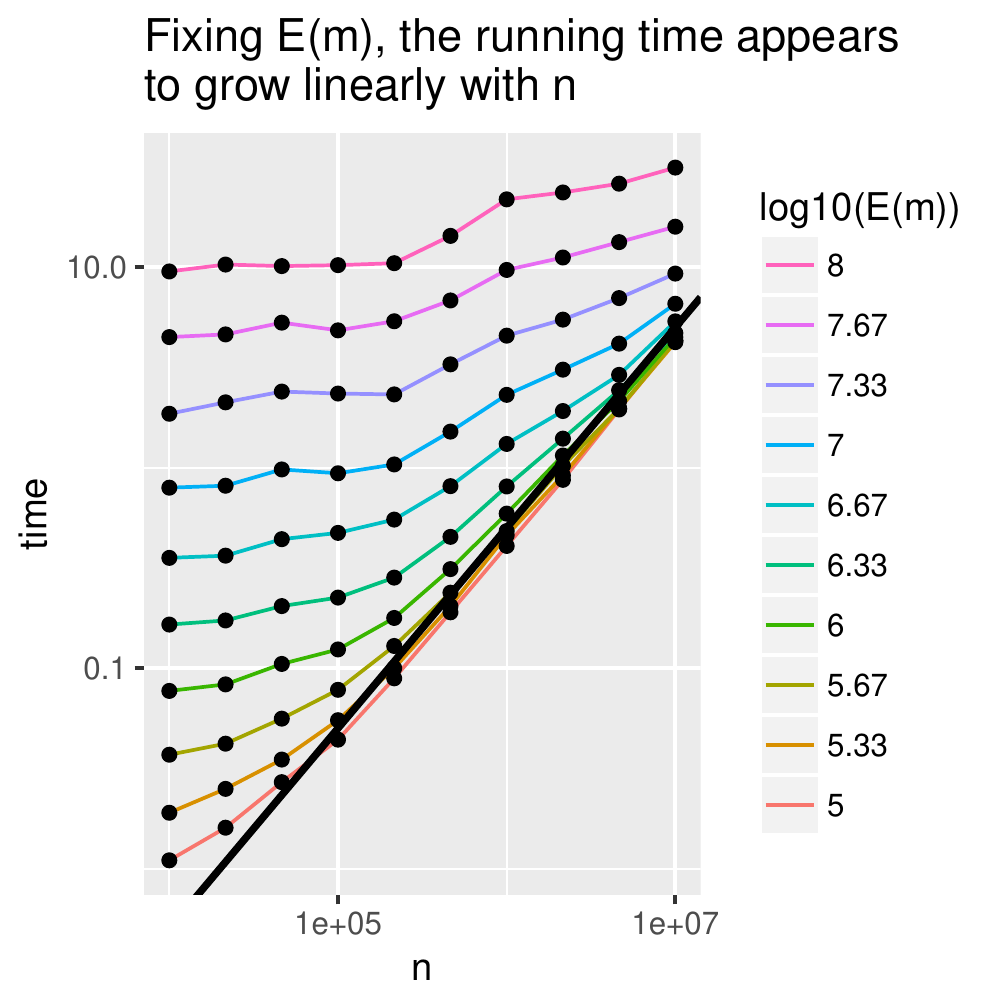}
    }
  \caption{Both plots present the same experimental data.  In the left plot, each  line corresponds to a different value of $n$ and they are presented as a function of $E(m)$.  In the right plot, each  line corresponds to a different value of $E(m)$ and they are presented as a function of $n$.  On the right side of both plots, the lines start to align with the solid black line, suggesting a linear dependence on $E(m)$ and $n$.  
  }
  %% label for entire figure
  \label{fig1}
\end{figure}
%To make it clearer, we do log-transformation for $time$ and $n$. In Figure \ref{speed1b}, the slope for the ``original" algorithm is approximately 2 and the slope for our ``fast" algorithm is approximately 1 which lends great credence to the fact that our ``fast" algorithm is $O(n)$ complexity while the original algorithm is $O(n^2)$ complexity.

\vspace{.1in}
\noindent
\textbf{Acknowledgements:} 
This work is supported in part by the U. S. Army Research Office under grant number W911NF1510423, the National Science Foundation under grant number DMS-1612456.
\appendix
  \renewcommand{\appendixname}{Appendix~\Alph{section}}

\section{Proofs}
For an integer $d$, define $1_d \in \mathds{R}^d$ as a vector of ones.  The proof of Theorem \ref{theoremRDPG} requires the following lemma, which says that a vector (or matrix) of independent Poisson entries becomes multinomial when you condition on the sum of the vector (or matrix).
\begin{lemma}
\label{lemma1}
Let $A \in \mathds{R}^{n \times n}$ be the random matrix whose $i,j th$ element $A_{ij} \overset{i.d.}{\sim}$ Pois$(\lambda_{ij})$ $i,j=1, \ \dots, \ n$.
Then conditioned on $\sum_{i,j}A_{ij} =  1_n^TA1_n=m$,
\begin{equation*}
(A_{11},\ A_{12},\ \dots,\ A_{nn}) \sim Multinomial(m,\ \lambda / \sum_{ij} \lambda_{ij})
\end{equation*}
where $\lambda = (\lambda_{11},\ \lambda_{12},\ \dots,\ \lambda_{nn})$.  That is, let $a\in \mathds{R}^{n \times n}$ be a fixed matrix of integers with $1_n^Ta1_n=m$, then
\begin{eqnarray*}
\mathbb{P}(A = a|1_n^T A 1_n = m)&=&\mathbb{P}(A_{11} = a_{11}, \ A_{12} = a_{12},\ \dots,\ A_{nn} = a_{nn}|1_n^T A 1_n = m) \\
&=&  \frac{m!}{\prod_{i,j}a_{ij}!} \prod_{i,j} \left(\frac{\lambda_{ij}}{\lambda_{11} + \lambda_{12}+\cdots +\lambda_{nn}} \right)^{a_{ij}}.
\end{eqnarray*}
\end{lemma}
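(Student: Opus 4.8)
The plan is to compute the conditional probability directly from its definition, $\mathbb{P}(A = a \mid 1_n^T A 1_n = m) = \mathbb{P}(A = a,\ 1_n^T A 1_n = m) \big/ \mathbb{P}(1_n^T A 1_n = m)$, and then simplify. First I would handle the numerator. If $a$ is a fixed matrix of nonnegative integers with $1_n^T a 1_n = m$, then the event $\{A = a\}$ already forces $\{1_n^T A 1_n = m\}$, so the joint probability equals $\mathbb{P}(A = a)$; by independence of the entries this is $\prod_{i,j} e^{-\lambda_{ij}} \lambda_{ij}^{a_{ij}} / a_{ij}!$. (If $a$ has a negative or non-integer entry, or $1_n^T a 1_n \neq m$, both sides are zero, so it suffices to prove the identity on this support.)

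Second, I would compute the denominator. The only external fact needed is that a finite sum of independent Poisson random variables is again Poisson with rate equal to the sum of the individual rates; this follows, for instance, by multiplying moment generating functions or by induction on convolutions. Writing $\Lambda = \sum_{i,j} \lambda_{ij}$, this gives $1_n^T A 1_n \sim \mathrm{Pois}(\Lambda)$, hence $\mathbb{P}(1_n^T A 1_n = m) = e^{-\Lambda} \Lambda^m / m!$.

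Third, I would divide and collect terms. The product $\prod_{i,j} e^{-\lambda_{ij}} = e^{-\Lambda}$ in the numerator cancels the $e^{-\Lambda}$ in the denominator; the $1/m!$ in the denominator moves to the numerator as $m!$; the $\Lambda^m$ in the denominator, written as $\Lambda^{\sum_{i,j} a_{ij}}$, distributes across the product and pairs with $\prod_{i,j} \lambda_{ij}^{a_{ij}}$ to give $\prod_{i,j} (\lambda_{ij}/\Lambda)^{a_{ij}}$; and $\prod_{i,j} a_{ij}!$ remains in the denominator. The result is exactly $\frac{m!}{\prod_{i,j} a_{ij}!} \prod_{i,j} (\lambda_{ij}/\Lambda)^{a_{ij}}$, which is the claimed $\mathrm{Multinomial}(m, \lambda/\Lambda)$ probability mass function.

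I do not expect a genuine obstacle: once the Poisson-sum fact is in hand the argument is a short bookkeeping computation. The only point meriting a word of care is the degenerate case — stating clearly that $\{A = a\}$ implies $\{1_n^T A 1_n = m\}$ and that both pmfs vanish off the support — so that the conditional pmf is well defined and the identity is verified on all of its support rather than merely on a typical $a$.
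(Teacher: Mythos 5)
Your proposal is correct and follows essentially the same route as the paper's proof: write the conditional probability as the ratio of the joint Poisson pmf $\prod_{i,j} e^{-\lambda_{ij}}\lambda_{ij}^{a_{ij}}/a_{ij}!$ to the Poisson($\sum_{ij}\lambda_{ij}$) pmf of the sum, and simplify to the multinomial form. Your explicit handling of the degenerate cases (entries of $a$ off the support) is a small point of extra care that the paper omits but does not change the argument.
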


\vspace{.2 in}

For completeness, a proof of this classical result is given at the end of the paper. The next proof is a proof of Theorem \ref{theoremRDPG}.

\vspace{.2 in}

\begin{proof}
Let $A$ come from the Poisson gRPG with $X$ and $S$ and identity mean function.  Let $\tilde A$ be a sample from \texttt{fastRG}.   For any fixed adjacency matrix $a$, we will show that $\mathbb{P}(A = a)=\mathbb{P}(\tilde A = a)$.

Define $m = 1_n^Ta1_n$ and decompose the probabilities,
\begin{eqnarray}
\mathbb{P}(A = a) &=& \mathbb{P}(1_n^T A 1_n =m) \mathbb{P}(A=a| 1_n^T A 1_n =m) \\
\mathbb{P}(\tilde A = a) &=& \mathbb{P}(1_n^T \tilde A 1_n =m) \mathbb{P}(\tilde A=a| 1_n^T \tilde A 1_n =m).
\end{eqnarray}
The proof will be divided into two parts.  The first part shows that $\mathbb{P}(1_n^T A 1_n =m) = \mathbb{P}(1_n^T \tilde A 1_n =m)$ and the second part will show that  $\mathbb{P}(A=a| 1_n^T A 1_n =m) = \mathbb{P}(\tilde A=a| 1_n^T \tilde A 1_n =m)$.

\textbf{Part 1:} 
%Note that $A_{ij} \overset{i.d.}{\sim} Poisson(\lambda_{ij})$. 
Because the sum of independent Poisson variables is still Poisson, $\sum_{ij} A_{ij} \sim Poisson(\sum_{ij}\lambda_{ij}).$  So, we must only show that $1_n^T A 1_n$ and $1_n^T \tilde A 1_n$ have the same Poisson parameter:
%to show that $\mathbb{P}(1_n^T A 1_n =m) = \mathbb{P}(1_n^T \tilde A 1_n =m)$, 
%we must only show that $\sum_{ij}\lambda_{ij} = \sum_{uv} \tilde S_{uv}$,
\[\sum_{ij}\lambda_{ij} = 1_n^T X S X 1_n  =  1_n^T XC C^{-1} S C^{-1} C X 1_n  =  1_n^T \tilde X\tilde S \tilde X 1_n  =  1_n^T \tilde X\tilde S \tilde X 1_n  =  1_K^T \tilde S 1_K  =  \sum_{u,v} \tilde S_{uv}.\]

%\begin{eqnarray*}
%\sum_{ij}\lambda_{ij} &=& 1_n^T X S X 1_n \\
%&=& 1_n^T XC C^{-1} S C^{-1} C X 1_n \\
%&=& 1_n^T \tilde X\tilde S \tilde X 1_n \\
%&=& 1_n^T \tilde X\tilde S \tilde X 1_n \\
%&=& 1_K^T \tilde S 1_K \\
%&=& \sum_{u,v} \tilde S_{uv}
%\end{eqnarray*}

\textbf{Part 2:} After conditioning on $1_n^T A 1_n = m$,  Lemma \ref{lemma1} shows that $A$ has the multinomial distribution.  In \texttt{fastRG}, we first sample $1_n^T \tilde A1_n$ and then add edges with the multinomial distribution.  So,  we must only show that the multinomial edge probabilities are equal for $A$ and $\tilde A$.  From Lemma \ref{lemma1}, the multinomial edge probabilities for $A$ are $\lambda_{ij} /\sum_{a,b} \lambda_{ab} $.
%\begin{eqnarray}
%
%\end{eqnarray}
%To compute the multinomial edge probabilities, condition on $\sum_{i,j}A_{ij}= 1$.   To compute the edge probability for element $i,j$, denote $\mathds{1}_{ij}$ as a matrix which has a $1$ at $ij$ th position, and $0$s otherwise.
%\begin{eqnarray}
%\nonumber \mathbb{P}(A = \mathds{1}_{ij} | m = 1  )
%%                            &=& \frac{1!}{\prod_{a,b}A_{ab}!} \prod_{a,b} \left(\frac{\lambda_{ab}}{\lambda_{11} + \lambda_{12} + \dots +\lambda_{nn}} \right)^{A_{ab}}\\
%\nonumber                             &=& \frac1{(0!)^{n^2-1}\cdot1!}\cdot\frac{\lambda_{ij}}{\sum_{a,b} \lambda_{ab}}\\
%\label{eq:oneedge}                            &=& \frac{\langle x_i, x_j\rangle_S}{\sum_{a,b} \langle x_a, x_b\rangle_S}.
%\end{eqnarray}
To compute the multinomial edge probabilities for $\tilde A$, recall that $(I,J)$ is a single edge added to the graph in \texttt{fastRG}. By Theorem \ref{theorem:xlr},
\[\mathbb{P}(\tilde A_{ij} = 1 | 1_n^T\tilde A1_n = 1  ) =  \mathbb{P}\big((I,J)=(i,j)\big) = \frac{\langle x_i, x_j\rangle_S}{\sum_{a,b} \langle x_a, x_b\rangle_S} =  \frac{\lambda_{ij}}{\sum_{a,b} \lambda_{ab}}\]
%\begin{eqnarray*}
%\mathbb{P}(\tilde A_{ij} = 1 | 1_n^T\tilde A1_n = 1  ) &=& \mathbb{P}\big((I,J)=(i,j)\big) \\
%&=& \sum_{u,v} \mathbb{P}(I=i,J=j|U=u, V=v)\mathbb{P}(U=u, V=v)\\
%                            &=& \sum_{u,v} \tilde X_{iu} \tilde X_{ju} \frac{\tilde S_{uv}}{\sum_{u,v} \tilde S_{u,v}}\\
%                            &=& \frac{ \sum_{u,v} X_{iu} X_{ju} S_{uv}}{\sum_{u,v} \tilde S_{u,v}}\\
%                            &=& \frac{\langle x_i, x_j\rangle_S}{\sum_{a,b} \langle x_a, x_b\rangle_S}.\\
%                            &=& \frac{\lambda_{ij}}{\sum_{a,b} \lambda_{ab}}. 
%%                            &= \sum_u \frac{X_{iu}X_{ju}}{(\sum_t X_{tu})^2}\cdot\\
%%                            \frac{(\sum_t X_{tu})^2}{\sum_{a,b} \langle x_a, x_b\rangle}\\
%%                            &= \frac{\langle x_i, x_j\rangle}{\sum_{a,b} \langle x_a, x_b\rangle}.
%\end{eqnarray*}
This concludes the proof.
\end{proof}

\begin{proof}[Proof of Theorem \ref{theoremThresh}]
%Let $U_{ij} \overset{i.i.d}{\sim}$ $Uniform(0,1)$. Define $t(\tilde A)$ and $B$: 
%\begin{align*}
%t(\tilde A)_{ij} &= \bold{1}(U_{ij}>1-e^{-\lambda_{ij}}),\\
%B_{ij} &= \bold{1}(U_{ij}>\lambda_{ij}).
%\end{align*}
%Note that $t(\tilde A)$ and $B$ have the correct distribution.  By Taylor expansion,
%%To calculate  $E\|t(\tilde A) - B\|^2_F$ and $E\|B\|^2_F$:
%\begin{eqnarray*}
%E\|t(\tilde A) - B\|^2_F &=& \sum_{i,j}(\lambda_{ij}-(1-e^{-\lambda_{ij}}))\\
%&=& \sum_{i,j} \sum_{k=2}^\infty (-\lambda_{ij})^k/k!\\
%&=& \sum_{i,j} O(\lambda_{ij}^2)\\
%&=& \sum_{i,j} O((\alpha_n/n)^2)\\
%&=& O(\alpha_n^2).
%\end{eqnarray*}
%At the same time, $E\|B\|_F^2  = \sum_{ij} \lambda_{ij} > c \alpha_n n $.  So, define $\tilde A$ 
%\begin{proof}
Let $U_{ij} \overset{i.i.d}{\sim}$ $Uniform(0,1)$. Define $\mathcal{A}$ and $\mathcal{B}$: 
\begin{align*}
\mathcal{A}_{ij} &= \bold{1}(U_{ij}>1-e^{-\lambda_{ij}}),\\
\mathcal{B}_{ij} &= \bold{1}(U_{ij}>\lambda_{ij}).
\end{align*}
Note that $\mathcal{A}$ and $\mathcal{B}$ are equal in distribution to $ t(\tilde A)$ and $B$ respectively.  By Taylor expansion,
\[E\|\mathcal{A} - \mathcal{B}\|^2_F = \sum_{i,j}(\lambda_{ij}-(1-e^{-\lambda_{ij}})) =  \sum_{i,j} \sum_{k=2}^\infty (-\lambda_{ij})^k/k! =  \sum_{i,j} O(\lambda_{ij}^2) =  \sum_{i,j} O((\alpha_n/n)^2) =  O(\alpha_n^2).\]
%%To calculate  $E\|\mathcal{A} - \mathcal{B}\|^2_F$ and $E\|\mathcal{B}\|^2_F$:
%\begin{eqnarray*}
%E\|\mathcal{A} - \mathcal{B}\|^2_F &=& \sum_{i,j}(\lambda_{ij}-(1-e^{-\lambda_{ij}}))\\
%&=& \sum_{i,j} \sum_{k=2}^\infty (-\lambda_{ij})^k/k!\\
%&=& \sum_{i,j} O(\lambda_{ij}^2)\\
%&=& \sum_{i,j} O((\alpha_n/n)^2)\\
%&=& O(\alpha_n^2).
%\end{eqnarray*}
Then, $E\|B\|_F^2  = \sum_{ij} \lambda_{ij} > c \alpha_n n $.  So, defining $t(\tilde A)$ and $B$ with the above coupling yields the result.
%\[\frac{E\|\mathcal{A} - \mathcal{B}\|^2_F}{E\|\mathcal{B}\|^2_F}  = O(\alpha_n/n).\]
%\[\frac{E\|t(\tilde A) - B\|^2_F}{E\|B\|^2_F}  = O(\alpha_n/n).\]

\end{proof}

%\appendix
%\section*{Appendix A.}
%\label{app:theorem}
%
%% Note: in this sample, the section number is hard-coded in. Following
%% proper LaTeX conventions, it should properly be coded as a reference:
%
%%In this appendix we prove the following theorem from
%%Section~\ref{sec:textree-generalization}:
%
%In this appendix we prove the following theorem from
%Section~6.2:
%
%\noindent
%{\bf Theorem} {\it Let $u,v,w$ be discrete variables such that $v, w$ do
%not co-occur with $u$ (i.e., $u\neq0\;\Rightarrow \;v=w=0$ in a given
%dataset $\dataset$). Let $N_{v0},N_{w0}$ be the number of data points for
%which $v=0, w=0$ respectively, and let $I_{uv},I_{uw}$ be the
%respective empirical mutual information values based on the sample
%$\dataset$. Then
%\[
%	N_{v0} \;>\; N_{w0}\;\;\Rightarrow\;\;I_{uv} \;\leq\;I_{uw}
%\]
%with equality only if $u$ is identically 0.} \hfill\BlackBox
%
%\noindent
%{\bf Proof}. We use the notation:
%\[
%P_v(i) \;=\;\frac{N_v^i}{N},\;\;\;i \neq 0;\;\;\;
%P_{v0}\;\equiv\;P_v(0)\; = \;1 - \sum_{i\neq 0}P_v(i).
%\]
%These values represent the (empirical) probabilities of $v$
%taking value $i\neq 0$ and 0 respectively.  Entropies will be denoted
%by $H$. We aim to show that $\fracpartial{I_{uv}}{P_{v0}} < 0$....\\
%
%{\noindent \em Remainder omitted in this sample. See http://www.jmlr.org/papers/ for full paper.}

\begin{proof}[proof of Lemma 1]
\begin{align*}
\mathbb{P}(A=a|1_n^T A 1_n = m) &= \frac{\mathbb{P}(A=a)}{\mathbb{P}(1_n^T A 1_n = m)}
 = \dfrac{\prod_{i,j} \dfrac{\lambda_{ij}^{a_{ij}}}{a_{ij}!}e^{-\lambda_{ij}}}{\dfrac{(\lambda_{11} + \lambda_{12}+\cdots +\lambda_{nn})^m}{m!}e^{-({\lambda_{11} + \lambda_{12}+\cdots +\lambda_{nn})}}}\\
& = \frac{m!}{\prod_{i,j}a_{ij}!} \prod_{i,j} \left(\frac{\lambda_{ij}}{\lambda_{11} + \lambda_{12}+\cdots +\lambda_{nn}} \right)^{a_{ij}}
\end{align*}
\end{proof}

%\vskip 0.2in
\bibliographystyle{plainnat}
\bibliography{bibfile}

\begin{thebibliography}{13}
\providecommand{\natexlab}[1]{#1}
\providecommand{\url}[1]{\texttt{#1}}
\expandafter\ifx\csname urlstyle\endcsname\relax
  \providecommand{\doi}[1]{doi: #1}\else
  \providecommand{\doi}{doi: \begingroup \urlstyle{rm}\Url}\fi

\bibitem[Airoldi et~al.(2008)Airoldi, Blei, Fienberg, and
  Xing]{Airoldi2008Mixed}
E.~M. Airoldi, D.~M. Blei, S.~E. Fienberg, and E.~P. Xing.
\newblock Mixed membership stochastic blockmodels.
\newblock \emph{Journal of Machine Learning Research}, 9\penalty0 (5):\penalty0
  1981--2014, 2008.

\bibitem[Crane and Dempsey(2016)]{crane2016edge}
Harry Crane and Walter Dempsey.
\newblock Edge exchangeable models for network data.
\newblock \emph{arXiv preprint arXiv:1603.04571}, 2016.

\bibitem[Herlau et~al.(2016)Herlau, Schmidt, and M\o~rup]{NIPS2016_6521}
Tue Herlau, Mikkel~N Schmidt, and Morten M\o~rup.
\newblock Completely random measures for modelling block-structured sparse
  networks.
\newblock In \emph{Advances in Neural Information Processing Systems 29}. 2016.

\bibitem[Hoff(2007)]{Hoff2007Modeling}
Peter~D. Hoff.
\newblock Modeling homophily and stochastic equivalence in symmetric relational
  data.
\newblock \emph{Advances in Neural Information Processing Systems},
  20:\penalty0 657--664, 2007.

\bibitem[Hoff et~al.(2002)Hoff, Raftery, and Handcock]{hoff2002latent}
Peter~D Hoff, Adrian~E Raftery, and Mark~S Handcock.
\newblock Latent space approaches to social network analysis.
\newblock \emph{Journal of the american Statistical association}, 97\penalty0
  (460):\penalty0 1090--1098, 2002.

\bibitem[Holland et~al.(1983)Holland, Laskey, and
  Leinhardt]{Holland1983Stochastic}
Paul~W. Holland, Kathryn~Blackmond Laskey, and Samuel Leinhardt.
\newblock Stochastic blockmodels: First steps.
\newblock \emph{Social Networks}, 5\penalty0 (2):\penalty0 109--137, 1983.

\bibitem[Karrer and Newman(2011)]{PhysRevE.83.016107}
Brian Karrer and M.~E.~J. Newman.
\newblock Stochastic blockmodels and community structure in networks.
\newblock \emph{Phys. Rev. E}, 83:\penalty0 016107, Jan 2011.

\bibitem[Latouche and Ambroise(2011)]{Latouche2011Overlapping}
Pierre Latouche and Christophe Ambroise.
\newblock Overlapping stochastic block models with application to the french
  political blogosphere.
\newblock \emph{Annals of Applied Statistics}, 5\penalty0 (1):\penalty0
  309--336, 2011.

\bibitem[Lehoucq et~al.()Lehoucq, Sorensen, and Vu]{lehoucqarpack}
RB~Lehoucq, DC~Sorensen, and P~Vu.
\newblock Arpack: An implementation of the implicitly re-started arnoldi
  iteration that computes some of the eigenvalues and eigenvectors of a large
  sparse matrix, 1995.
\newblock \emph{Available from netlib@ ornl. gov under the directory
  scalapack}.

\bibitem[{Todeschini} and {Caron}(2016)]{2016arXiv160202114T}
A.~{Todeschini} and F.~{Caron}.
\newblock {Exchangeable Random Measures for Sparse and Modular Graphs with
  Overlapping Communities}.
\newblock \emph{ArXiv e-prints}, February 2016.

\bibitem[Vose(1991)]{vose1991linear}
Michael~D. Vose.
\newblock A linear algorithm for generating random numbers with a given
  distribution.
\newblock \emph{IEEE Transactions on software engineering}, 17\penalty0
  (9):\penalty0 972--975, 1991.

\bibitem[Walker(1977)]{walker1977efficient}
Alastair~J Walker.
\newblock An efficient method for generating discrete random variables with
  general distributions.
\newblock \emph{ACM Transactions on Mathematical Software (TOMS)}, 3\penalty0
  (3):\penalty0 253--256, 1977.

\bibitem[Young and Scheinerman(2007)]{Young2007}
Stephen~J. Young and Edward~R. Scheinerman.
\newblock \emph{Random Dot Product Graph Models for Social Networks}, pages
  138--149.
\newblock Springer Berlin Heidelberg, Berlin, Heidelberg, 2007.

\end{thebibliography}

\end{document}